\newcommand{\hide}[1]{}
\newenvironment{thmtag}[2][Theorem]{\begin{trivlist}
\item[\hskip \labelsep {\bfseries #1}\hskip \labelsep {\bfseries #2}]}{\end{trivlist}}
\newcommand{\eod}{{${}$\\}}
\newcommand{\cA}{{\mathcal A}}
\newcommand{\ccH}{{\mathcal H}}
\newcommand{\cN}{{\mathcal N}}
\newcommand{\cS}{{\mathcal S}}
\newcommand{\cX}{{\mathcal X}}
\newcommand{\bE}{{\mathbb E}}
\newcommand{\bI}{{\mathbb I}}
\newcommand{\bR}{{\mathbb R}}
\newcommand{\reward}{{R}}
\newtheorem{thm}{Theorem}
\newtheorem{assm}{Assumption}
\newtheorem{eg}{Example}
\newtheorem{defn}{Definition}
\theoremstyle{remark}
\newtheorem{rem}{Remark}
\title{Metabolic cost as an organizing principle\\
for cooperative learning}
\author{
David Balduzzi$^1$, Pedro A. Ortega$^2$, Michel Besserve$^2$\\
$^1$Swiss Federal Institute for Technology, ETH Z{\"u}rich \\
$^2$Max Planck Institute for Intelligent Systems, T{\"u}bingen, Germany \\
\texttt{david.balduzzi@inf.ethz.ch, \{ortega, besserve\}@tuebingen.mpg.de}
}
\begin{document}

\maketitle

\begin{abstract}
This paper investigates how neurons can use metabolic cost to facilitate learning at a population level. Although decision-making by individual neurons has been extensively studied, questions regarding how neurons should behave to cooperate effectively remain largely unaddressed. Under assumptions that capture a few basic features of cortical neurons, we show that constraining reward maximization by metabolic cost aligns the information content of actions with their expected reward. Thus, metabolic cost provides a mechanism whereby neurons encode expected reward into their outputs. Further, aside from reducing energy expenditures, imposing a tight metabolic constraint also increases the accuracy of empirical estimates of rewards, increasing the robustness of distributed learning. Finally, we present two implementations of metabolically constrained learning that confirm our theoretical finding. These results suggest that metabolic cost may be an organizing principle underlying the neural code, and may also provide a useful guide to the design and analysis of other cooperating populations.
\end{abstract}

\section{Introduction}

Rational decision making is typically formalized as optimizing a reward function \cite{sutton:98, Russel:09, OrtegaBraun:11}. This paper investigates neuronal learning from an optimization perspective. We assume that both the brain as a whole and individual neurons are rational decision makers optimizing reward functions of some kind. 

Since the brain learns from finite samples, it is exposed to a tradeoff between over- and under-fitting: increasing a model's capacity can improve its fit on training data, but potentially worsens performance on future samples \cite{vapnik:82}. Remarkably, however, the human brain effortlessly handles a wide-range of complex pattern recognition tasks suggesting it both has a large capacity and, paradoxically, also generalizes extremely well. On the basis of these conflicting observations, it has been argued that \emph{useful biases} in the form of ``generic mechanisms for representation'' must be hardwired into cortex \cite{geman:92}. Our goal in this paper is to propose a bias that is both useful and biologically plausible.

Let us outline the problem. Neurons learn inductively. They can generalize from finite samples and encode estimates of future outcomes (for example, rewards) into their spiketrains \cite{Gottfried:2003fk}. Learning-theoretical results imply that generalizing successfully from small samples requires strong biases \cite{vapnik:82} or, in other words, specialization. Thus, at any given time some neurons' specialties are more relevant than others. Since most of the data neurons receive are other neurons' outputs, it is essential that they indicate which of their outputs encode high quality estimates. Downstream neurons should then be biased to specialize on these outputs, thereby reducing the effective search space that neurons explore.

The question we ask is: \emph{How can a} single \emph{neuron near-maximize its} expected \emph{reward and simultaneously help downstream neurons do the same?} 

As a partial answer, we propose the following organizing principle. Neurons should consistently label outputs as useful and not useful. Specifically, the optimizations performed by neurons should be designed so that spikes are useful and silences are not. For example, it is well known that, on average, the more spikes a neuron receives, the more it learns by modifying its synapses. Although this fact is often taken for granted, it requires explanation. We suggest that spikes drive learning because they are useful. By useful, we mean an output that  (i) predicts high reward with (ii) tight confidence intervals.

Distinguishing useful from irrelevant outputs is helpful to downstream neurons since it reduces the size of the search space they are confronted with. If spikes reliably predict future reward, then neurons should be biased towards learning from spikes. In fact, this fits well with experimental evidence \cite{dan:04, dan:06, pawlak:10}. Moreover, consistently biasing learning toward outputs labeled as useful (i.e. spikes) also provides a principled way to reduce capacity, and thus improve generalization guarantees, without sacrificing empirical performance.

The scope of this paper is limited to showing how neurons may systematically distinguish useful from irrelevant outputs. Fleshing out the implications for learning at the population level is deferred to future work. Furthermore, we only consider excitatory connections in this paper; the role of inhibition is also deferred to future work.

\paragraph{Overview.}
We explore the consequences of a few basic assumptions regarding spikes, metabolic cost and neuronal rewards, see \S\ref{s:setup}.  We consider a minimal model of biological neural network from which we extract guiding principles that may apply, at least approximately, to more complex, biologically realistic models. The most important assumptions are that neurons aim to maximize reward after spiking, and that neurons have to operate within a fixed metabolic budget that constrains how often they can spike in a given time interval.

It turns out that these seemingly innocuous assumptions are the key to distinguishing useful from irrelevant outputs. Our first result is that constrained reward maximization causes neurons to maximize the information they encode into their spikes, see \S\ref{s:inf-spike}. Moreover, if spikes are sufficiently rare, it turns out that spikes dominate the information communicated by neurons, which has interesting implications for credit assignment in cortex, see \S\ref{s:inf-spike} and \cite{bt:12}.

If neurons attempt to maximize empirical reward \emph{after spiking}, it follows that neurons encode reward estimates into spikes. When a neuron produces a spike, it thus signals to downstream neurons that it expects a positive neuromodulatory signal such as dopamine. This raises questions concerning the quality of the reward estimates encoded in spikes, see \S\ref{s:eve}. We show that the more information neurons encode into spikes, the tighter the guarantees tying empirical reward to expected reward. 

We conclude by describing some implications of our results for cooperative optimization. Theorems are proved in the appendices.

\paragraph{Related work.}
Neuronal plasticity and its implications for the neural code have been intensively studied for many years. The work closest in spirit to this paper is Seung's ``hedonistic'' synapses, which seek to increase average reward \cite{seung:03}. A second related line of research applies the information bottleneck method -- an alternate constraint to the one considered here --  to neuronal learning \cite{tishby:99, buesing:07}. 

An information-theoretic perspective on synaptic homeostasis that complements the results in this paper is \cite{bt:12}. The consequences of constrained optimization for spike-timing dependent plasticity (STDP) are presented in \cite{bb:12}. A practical implementation of regularized STDP, inspired by the ideas presented here, can be found in \cite{Nere:2012fk}.

\paragraph{Acknowledgements.}
We thank Yevgeny Seldin and Giulio Tononi for useful discussions.

\section{A minimal model}
\label{s:setup}

The mammalian cortex contains between $10^7$ and $10^{11}$ neurons (depending on the species) that are guided by neuromodulators signaling pleasure, pain and other globally salient events. Neurons communicate with each other via spiketrains -- sequences of silences and spikes they receive from and transmit to $10^3$ to $10^4$ other neurons through connections called synapses. Neurons learn by increasing and decreasing the efficacy of their synapses according to the timing of pre-synaptic (input) and post-synaptic (output) spikes, as well as the presence or absence of neuromodulators  \cite{Markram:1997fk, song:00, dan:04, dan:06, pawlak:10}.

\paragraph{Model neurons.}
The cortex can be modeled as a population of $K$ neurons. Neuron $n_k$ follows policy $\pi_k$, where $\pi_k(a|s)$ is a Markov matrix specifying the probability the neuron picks action, or output, $a \in \cA=\{0,1\}$ upon encountering situation $s \in \cS$. A situation, or input, is a vector $s=(a_1, \ldots, a_K)$ whose entries are the actions of all the neurons in the brain at the previous time step. Since each neuron is only exposed to a small fraction of the brain, it ignores most entries in the vector. Thus, the policy of neuron $k$ is $\pi_k(a|s) = \pi_k(a|\Sigma_k)$ for some subset $\Sigma_k \subset \{a_1, \ldots, a_K\}$. Let $P(S)$ denote the prior over situations. We will assume the prior is i.i.d. when providing guarantees on estimates, see Remark~\ref{r:iid} for a brief discussion.

Neurons are exposed to a global \emph{neuromodulatory signal} $\nu \in \cN$ that signals the performance of the population as a whole. Neuromodulators are drawn with probability $P(\nu|s)$.

We make the following assumptions.

\begin{assm}[reward maximization]\label{a:max}\eod
	Neurons maximize a reward function that depends on neuromodulatory signals, input spikes and output spikes:
	\begin{equation}
		\widehat\pi_k = \underset{\pi\in {\mathcal M}}{\arg\max} \sum_{i=1}^N \reward_k(s_i,a_i,\nu_i),
	\end{equation}
	where $a_i$ is the output chosen by $\pi$ in response to input $s_i$.	
\end{assm}

The set ${\mathcal M}$ is the set of possible neuronal mechanisms, a subset of the set Markov matrices on $S\times A$. Two examples of ${\mathcal M}$ that are relevant to our discussion are discrete threshold neurons
\begin{equation*}
	{\mathcal M} = \left\{H(\langle\mathbf{w},\mathbf{s}\rangle - \vartheta)\,\big|\, \mathbf{w}\in \bR^K\right\}
\end{equation*}
where $H(\bullet)$ is the Heaviside function, see \cite{bb:12}, and the full set of Markov matrices, see discussion of $Q$-learning below.

Note that, since different neurons are exposed to different subsets of the total brain activity, they have different reward structures, even for the same neuromodulators: in general $\reward_k\neq \reward_j$ for $j\neq k$ since $\Sigma_k\neq \Sigma_j$.

Since we focus on the behavior of a single neuron, we will often drop the subscript $k$ from the notation below.

\begin{assm}[spikes gate rewards]\label{a:gate}\eod
	The reward function is gated by synaptic outputs:
	\begin{equation}
		\reward(s,a,\nu) = \reward(s,\nu)\cdot \bI_{a=1},
		\quad \text{ where }\bI_{a=1}= \begin{cases}
			1 & a=1\\
			0 & \text{else}
		\end{cases}
		\quad
		\text{ is the indicator function.}
	\end{equation}	
\end{assm}

Neurophysiological evidence suggests that neurons only potentiate or depotentiate their synapses shortly before or after producing spikes \cite{Markram:1997fk, song:00, dan:04, dan:06}. This encourages specialization: neurons search for a small set of inputs that reliably predict future reward signals -- other inputs are ignored. 

For simplicity we assume neurons have two outputs, spikes and silence. We use notations $a_0$ or $a=0$ for silence and $a_1$ or $a=1$ for spikes.

\begin{assm}[metabolic budget]\label{a:metabolic}\eod
	Neurons have a fixed metabolic budget that determines the maximum frequency of spiking over some (unspecified) time period:
	\begin{equation}
		\pi(a_1) \leq \rho
	\end{equation}
	where $\pi(a_1)=\sum_s p(s)\cdot \pi(a_1|s)$ is the spiking frequency under policy $\pi$.
\end{assm}

Although a soft constraint is more biologically plausible, imposing a hard constraint simplifies the exposition without significantly altering the conclusions. The main effect of softening the constraints is to allow the information carried by spikes and the capacity of neurons to vary, thereby softening Theorems~\ref{t:conc-inf} and \ref{t:vcr} below.

 Since neurons only modify their synapses when they spike, it follows that neurons that spike very infrequently learn very little if at all. Thus, we expect that there are mechanisms in place ensuring that neurons not only stay within their metabolic budget, but also that they come close to using all of it.
 
Spikes and silences are not abstract, interchangeable symbols. Spiking and responding to spikes carries a much higher metabolic cost than not spiking \cite{Hasenstaub:2010fk}. This cost is significant since the nervous system consumes a disproportionate share of an organism's total energy budget \cite{attwell:01}. It has been hypothesized that a function of sleep is to homeostatically regulate synaptic strengths, so as to control metabolic expenditures associated with action potentials, see \cite{tononi:03b, Gilestro:2009fk, vyazovskiy:09, Maret:2011rt}.

Unlike ``agents in the wild'', individual neurons have negligible impact on what happens next: a single neuron has little effect on the neurons it targets -- since each receives inputs from thousands of other neurons. The inability of individual neurons to manipulate their environment simplifies the optimization problems they face by stripping out the recursive Bellman aspect \cite{bellman:57}:

\begin{assm}[disempowerment]\label{a:disempower}\eod
	An individual neuron has no immediate influence on its environment
	\begin{equation}
		\label{e:disempower}
		p\Big(s(t+1,\ldots)\,\Big|\,s(t),a_k(t)\Big) = p\Big(s(t+1,\ldots)\,\Big|\,s(t)\Big),
	\end{equation}
	where $t$ refers to time. I.e. conditioning on the neuron's output makes no difference to the distribution on subsequent actions by the rest of the brain.
\end{assm}

The future situations a neuron encounters are essentially unaffected by its output. This assumption fails at the population level -- populations of neurons necessarily affect the organism's actions. Nevertheless Eq.~\eqref{e:disempower} is a reasonable assumption at the individual neuron level since, for example, destroying a single neuron makes essentially no difference to brain function.

\section{Encoding useful information in spikes}
\label{s:inf-spike}

This section considers the implications of our assumptions for the information content of spikes. Mutual information provides a formal method for quantifying the information content of a channel. We present a related measure, effective information, that measures the information content of a single output in terms of how much that output reduces prior uncertainty regarding the set of inputs. Importantly, we show that under the above assumptions, spikes not only reduce uncertainty, but also signal that the neuron received an input that was historically followed by high rewards. The information encoded in spikes is thus indicative of expected future rewards.

\paragraph{Information.}
It is useful to consider neurons as \emph{communication channels} mapping situations to actions. The average information communicated by a neuron is then the mutual information $I_\pi(\cS,\cA)$. However, since we are interested in the information communicated by specific actions (in particular, spikes), we introduce effective information\footnote{We extend the definition of effective information in \cite{bt:08} to allow arbitrary priors instead of restricting to the uniform distribution.} \cite{bt:08, bt:09} which quantifies the information that a single output encodes about the input.

\begin{defn}[effective information]\eod
	\label{d:ei}
	Given a neuron with policy $\pi(a|s)$ and prior $P(\cS)$ on situations, the \emph{effective information} generated by action $a\in\cA$ is
	\begin{equation}
		ei(\pi,a) := D\big[\pi(S|a)\,\big\|\,P(S)\big]
		\,\,\,\text{ where } \pi(s|a):=\frac{\pi(a|s)}{\pi(a)}\cdot P(s)
	\end{equation}
	is computed via Bayes' rule and $D[\bullet\|\bullet]$ is the Kullback-Leibler divergence $D[p\|q]:=\sum p_i\log\frac{p_i}{q_i}$.
\end{defn}

An interesting special case is when the prior on situations is the uniform distribution and the policy is deterministic. It follows that
\begin{equation}
	\label{e:selectivity}
	ei(\pi,a) = -\log \frac{|\pi^{-1}(a)|}{|\cS|},
\end{equation}
where $|\bullet|$ denotes cardinality and (since the policy is deterministic), $\pi$ is  a function $\pi:\cS\rightarrow\cA$. In Eq~\eqref{e:selectivity},  effective information quantifies the \emph{selectivity} of an output: the fraction of inputs causing the policy to output $a$. The smaller the fraction, or alternatively the more sensitive output $a$ is to perturbations in the input, the higher effective information \cite{bt:12}.

\begin{rem}
	Suppose we have model $P_{\mathcal M}(d|h)$ that specifies the probability of observing data given a hypothesis. Further suppose we have prior distribution $P(h)$ on hypotheses. If we observe data $d$, how much have we learned about the hypotheses? The \emph{Bayesian information gain} is 
	\begin{equation}
	\label{e:big}
	D\big[P_{\mathcal M}(H|d)\,\big\|\,P(H)\big].
	\end{equation}
	If we consider a neuron's policy as a model, with inputs as hypotheses and outputs as evidence, then effective information quantifies the Bayesian information gained about the inputs given an output.
\end{rem}

\begin{rem}
	The \emph{expectation} of $ei$ is mutual information: $\bE_{\pi(a)}\big[ei(\pi,a)\big]=I_\pi(\cS,\cA)$.	
\end{rem}

\paragraph{Information aligns with rewards (theory).}
We show that neurons implementing constrained reward maximization from Assumption~\ref{a:max} also maximize the effective information of their spikes $ei(\pi,a_1)$, that we will call \emph{information per spike}.

\begin{defn}[empirical reward]\eod
	Given a finite sample of situations, actions and neuromodulators $(s_i,a_i,\nu_i)_{i=1}^N$, let the \emph{empirical reward} observed after performing action $a$ in situation $s$ be
	\begin{equation*}
		\widehat{\reward}(s,a):=\frac{1}{N}\sum_{\{i|s_i=s,a_i=a\}} \reward(s_i,a_i,\nu_i).		
	\end{equation*}	
	We also introduce the \emph{empirical reward after spiking}, $\frac{1}{N} \sum_{\{i|a_i=1\}} \reward\big(s_i,a_1,\nu_i\big)$ and the 
	\emph{empirical reward per spike} $\frac{1}{|\{i|a_i=1\}|} \sum_{\{i|a_i=1\}} \reward\big(s_i,a_1,\nu_i\big)$.
\end{defn}

\begin{thm}[maximizing reward/spike maximizes information/spike]\eod
	\label{t:maximizer}
	Assume that situations yield different empirical rewards, i.e. $\widehat{R}(s,a_1)\neq \widehat{R}(s',a_1)$ for all $s,s'\in\cS$. Recall that by Assumptions~\ref{a:max} and \ref{a:metabolic} an optimal policy $\widehat\pi$ satisfies
	\begin{equation*}
		\widehat{\pi} = \underset{\{\pi\in {\mathcal M}|\pi(a_1)\leq \rho\}}{\arg\max} \sum_{i=1}^N \reward(s_i,a_i,\nu_i), 
	\end{equation*}
	If we also apply constraint $\pi(a_1)\geq\rho$, i.e. $\rho$ is both upper and lower bound, then the optimal policy $\widehat{\pi}$ maximizes information per spike. More precisely, the optimal policy $\widehat{\pi}$ satisfies 
	\begin{equation*}
		ei(\widehat{\pi},a_1)\geq ei(\pi,a_1) \text{ for all }\pi \text{ such that }\pi(a_1)=\rho.
	\end{equation*}
\end{thm}

\begin{rem}
	The optimal policy will satisfy $\widehat{\pi}(a_1)=\rho$ if there are enough inputs $s$ satisfying $\bE_\nu[R(s,\nu,a_1)]>0$. In other words, if there are enough situations where spiking, on average, is followed by positive reward. Imposing the lower bound in the theorem means we compare the optimal policy with alternate policies that spike with the same frequency.
\end{rem}

Thus, the optimal policy necessarily maximizes both the empirical reward after spiking and the effective information encoded in spikes. We illustrate this result with simulations below.

\paragraph{Information aligns with rewards (experiments).}
A learning algorithm implementing constrained reward maximization is a modification of $Q$-learning \cite{watkins:92}:

\begin{eg}[Metabolically constrained $Q$-learning]\eod
	\label{eg:mq}
	If a neuron chooses action $a$ in situation $s$ and subsequently receives neuromodulator $\nu$, then it updates the $Q$-matrix by
	\begin{equation*}
		Q(s,a)\leftarrow Q(s,a)+ \alpha\cdot\Big[\widehat{\reward}(s,a,\nu)-Q(s,a)\Big],
	\end{equation*}
	where $\alpha$ controls the rate. After updating $Q$, the neuron constructs new policy
	\begin{equation*}
		\pi(a|s) = {\mathcal M}^n\Big(e^{Q(s,a)}\Big).
	\end{equation*}
	Operation ${\mathcal M}(\bullet)$ renormalizes the policy twice: first by $Z(a)$ chosen such that $\sum_{s\in\cS}\pi(a|s)P(s)=P(a)$ for all $a$, and then by $Z(s)$ chosen such that $\sum_{a\in\cA}\pi(a|s)=1$ for all $s$. Setting $n=3$ yields a policy that approximately implements the metabolic constraint.
\end{eg}

Figure~\ref{f:ei-ev} shows how effective information and empirical reward after spiking covary as neurons $Q$-learn. We initialized 5000 neurons randomly and applied metabolic constraints $\rho\in\{0.1,0.3,0.5\}$. Rewards are drawn randomly. As the neurons adapt, their policies become both more deterministic and more likely to spike in situations yielding higher rewards, so as neurons adapt they both encode more information into their spikes and predict higher rewards after spiking. The tighter the metabolic constraint (i.e. the lower $\rho$), the higher the empirical reward after spiking. Thus, the information encoded in spikes provides a reliable guide to the empirical reward after spiking.

\paragraph{Spikes dominate information content.}
Theorem~\ref{t:maximizer} shows information per spike is maximized by constrained reward maximization. Theorem~\ref{t:conc-inf} below consider how much of the total information communicated by a neuron is carried by spikes.

\begin{thm}[spikes carry essentially all information]\eod
	\label{t:conc-inf}
	Suppose a neuron has two actions (silence $a_0$ and spike $a_1$) and produces spikes infrequently: $\pi(a_1)\ll1$. Then the total information communicated by the neuron is approximately the information it communicates using spikes alone:
	\begin{equation}
		I_\pi(\cS;\cA) = \pi(a_1)\cdot ei(\pi,a_1)
		+ O\left(\pi(a_1)^2\right).
	\end{equation}
\end{thm}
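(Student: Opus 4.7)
The starting point is the identity
\[
I(\cS;\cA) \;=\; \sum_{a\in\cA}\pi(a)\cdot ei(\pi,a) \;=\; \pi(a_0)\cdot ei(\pi,a_0) + \pi(a_1)\cdot ei(\pi,a_1),
\]
recorded in the remark after Definition~\ref{d:ei}, which reduces the claim to showing that the silence contribution $\pi(a_0)\cdot ei(\pi,a_0)$ is $O(\pi(a_1)^2)$.

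Write $p := \pi(a_1)$ and $q_s := \pi(a_1|s)$, so that $\bE_{P(s)}[q_s]=p$. Bayes' rule gives $\pi(s|a_0) = (1-q_s) P(s)/(1-p)$ and $\pi(a_0|s)/\pi(a_0) = (1-q_s)/(1-p)$, so the silence contribution rewrites as
\[
\pi(a_0)\cdot ei(\pi,a_0) \;=\; \sum_s P(s)\,(1-q_s)\log\frac{1-q_s}{1-p}.
\]
I would then Taylor expand $\log(1-x) = -x - \tfrac{1}{2}x^2 + O(x^3)$ inside both the $\log(1-q_s)$ and $\log(1-p)$ pieces, sum termwise, and use the marginal identity $\bE_{P(s)}[q_s] = p$ to kill the first-order contributions $-p + p = 0$. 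What survives at second order is $\tfrac{1}{2}\bigl(\bE_P[q_s^2] - p^2\bigr) = \tfrac{1}{2}\mathrm{Var}_P(q_s)$ plus higher-order remainder, and this is $O(p^2)$ whenever $q_s$ stays comparable to $p$ uniformly in $s$.

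The hard part will be making the pointwise Taylor step legitimate: the expansion requires $q_s$ itself to be small, whereas the hypothesis $\pi(a_1)\ll 1$ only controls $q_s$ on average. In particular, for policies that concentrate -- e.g.\ the deterministic limit $\beta\to\infty$ with $q_s\in\{0,1\}$ -- one has $\mathrm{Var}_P(q_s) = p(1-p)$, which is only $\Theta(p)$. I would address this by partitioning the sum at a threshold $\{q_s\le\epsilon\}$ versus $\{q_s>\epsilon\}$: on the first region the Taylor expansion is rigorous, while on the complement Markov's inequality gives $P(q_s > \epsilon) \le p/\epsilon$ and the integrand $(1-q_s)\log\tfrac{1-q_s}{1-p}$ is bounded by an absolute constant. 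Tuning $\epsilon$ (e.g.\ $\epsilon = \sqrt{p}$) should then deliver the stated $O(p^2)$ bound under a mild uniform-smoothness condition on $\pi(a_1|\cdot)$ -- a condition that is implicit in the infrequent-spike regime in which Corollary~\ref{t:conc-rew} is applied.
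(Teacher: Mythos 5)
You take exactly the paper's route: decompose $I(\cS;\cA)=\pi(a_0)\,ei(\pi,a_0)+\pi(a_1)\,ei(\pi,a_1)$ and argue that the silence term is second order. Your expansion in $q_s=\pi(a_1|s)$ is the paper's expansion of $D[\pi(S|a_0)\|P(S)]=D[p+\delta p\,\|\,p]$ in different coordinates, and both land on the same leading correction: writing $p=\pi(a_1)$, the surviving term is $\tfrac{1}{2}\mathrm{Var}_P\big(\pi(a_1|S)\big)$ (up to the $1/\ln 2$ constant). Indeed the paper's quantity $\pi(a_1)^2\int\frac{(P(s)-\pi(s|a_1))^2}{2P(s)}$ is identically $\tfrac{1}{2}\mathrm{Var}_P(\pi(a_1|S))$ --- the explicit $\pi(a_1)^2$ prefactor is cancelled by a $1/\pi(a_1)^2$ hidden inside the integral --- so absorbing it into $O(\pi(a_1)^2)$ silently assumes that integral is bounded by a constant independent of $\pi(a_1)$.

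The difficulty you flag is therefore genuine, and it afflicts the paper's own proof, not just yours. Deterministic two-action policies give an explicit counterexample to the stated rate: there $I(\cS;\cA)=-p\log p-(1-p)\log(1-p)$ while $\pi(a_1)\,ei(\pi,a_1)=-p\log p$, so the error is $-(1-p)\log(1-p)=p+O(p^2)=\Theta(p)$, not $O(p^2)$. This is not a pathological case: the deterministic limit $\beta\to\infty$ is precisely the regime of Theorem~\ref{t:inf-cost} in which Corollary~\ref{t:conc-rew} is invoked. Your proposed truncation repair does not recover $O(p^2)$ either: Markov bounds the mass of $\{q_s>\epsilon\}$ by $p/\epsilon$ and the integrand there only by an absolute constant, so that region contributes $O(p/\epsilon)$, which no admissible $\epsilon\le 1$ pushes below $O(p)$ (your choice $\epsilon=\sqrt{p}$ gives $O(\sqrt{p})$, which is worse). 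Two honest statements are available. First, under the explicit hypothesis $\pi(a_1|s)\le C\,\pi(a_1)$ for all $s$ --- the ``uniform-smoothness'' you allude to --- your variance computation gives $\mathrm{Var}_P(q_s)\le C p^2$ directly and no truncation is needed. Second, unconditionally one has $0\le\pi(a_0)\,ei(\pi,a_0)=\sum_s P(s)(1-q_s)\log(1-q_s)-(1-p)\log(1-p)\le -(1-p)\log(1-p)\le p$, since each term $(1-q_s)\log(1-q_s)$ is nonpositive; so the theorem holds with $O(\pi(a_1))$ in place of $O(\pi(a_1)^2)$. That weaker version still supports Corollary~\ref{t:conc-rew} in a multiplicative sense, because $ei(\pi,a_1)\to\infty$ as $p\to 0$ for concentrated policies, but it is strictly less than what is claimed.
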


If the metabolic constraint is tight, meaning $\rho$ and so $\pi(a_1)$ are small, then for the optimal policy encodes a lot of information into spikes. In this setting, Theorem~\ref{t:conc-inf} implies that the information communicated by a neuron is (up to first order) carried by spikes alone.

Eq~\eqref{e:selectivity} and  Theorem~\ref{t:conc-inf} together have interesting implications for credit assignment, see \cite{bt:12} for details. In particular, if spikes carry most of the information in cortex, then spiking neurons and synapses should reinforced in response to positive global signals such as dopamine, and conversely for negative global signals. Neurons and synapses that are silent contribute little to the information generated by cortex, and so should be neither potentiated nor depotentiated. This fits neurophysiological evidence suggesting that spikes play a distinguished role in synaptic potentiation and depotentiation \cite{Markram:1997fk, dan:04, dan:06, pawlak:10}.

\paragraph{Encoding reward estimates in spikes.}
Finally, we briefly illustrate the effect of the metabolic constraint $\rho$ on the empirical reward per spike. Suppose we have sampled empirical rewards $\big\{\widehat{\reward}(s,a_1)\big|s\in\cS\big\}$. The optimal policy is constructed as follows. First, rank states by their empirical reward. Let $\cS_\rho$ denote the states in the top $\rho^{th}$ percentile. Define 
\begin{equation*}			
	\widehat{\pi}(a_1|s)=
	\begin{cases}
		1 & s\in \cS_{\rho}\\
		0 & \text{else.}
	\end{cases}
\end{equation*}

The optimal policy can be visualized as moving a window of fixed size and variable shape over the input space, such that the underlying configuration maximizes reward. 

Figure~\ref{f:bfo} shows an example. Situations are ranked according to their empirical reward. The metabolic constraint is set at $\rho=15\%$, so the optimal policy spikes for the 15\% of situations with highest reward. The policy picking these situations results in a reward per spike of 0.89. Situations which do not cause spikes receive \emph{no reward}; if spikes and silences would be exchanged, this reward would be .21 (we call it reward after silence).

We make two observations. First, tightening the metabolic constraint, so that the policy spikes for $\ll 15\%$ of situations, increases the reward per spike. Second, the variance in reward per spike is much lower than after not spiking, and typically decreases with $\rho$. 

As a general rule of thumb, tightening the metabolic constraint by decreasing $\rho$ both increases the empirical reward per spike and reduces the variance in empirical reward per spike.

\section{Guarantees on reward estimates}
\label{s:eve}

 In a dynamically changing environment like the cortex, it is important that neurons reliably represent high reward. This section shows that the reliability of spikes depends on how much information is encoded in them. 

\paragraph{Theoretical guarantees.}
We say that a neuron's spikes reliably represent reward when there is a low variability in the empirical mean reward when the neuron spikes. For a given policy $\pi$. The expected and empirical rewards per spike are
\begin{gather*}
	\reward_\pi:= \bE\Big[\reward(S,a_1,N)\Big] := \sum_{s,n}\pi(s|a_1)\cdot P(\nu|s)\cdot\reward(s,a_1,\nu)
	\text{ and}\\
	\widehat{\reward}_\pi:= \frac{1}{T_1}\sum_{\{(s_t,\nu_t)|\pi(s_t)=1\}} \reward\big(s_t,a_1,\nu_t\big)\text{ respectively},
\end{gather*}
where $T_1$ counts spikes produced by the neuron during $[1,T]$.  We measure sample size in terms of spikes (rather than spikes and silences) because spikes are metabolically expensive, so only spikes count towards the cost of collecting a finite sample.

Let us introduce notation for computations with respect to the \emph{uniform prior} $U(s)=\frac{1}{|\cS|}$, where $|\cS|$ is the total number of possible situations. Let $\pi_u(a)=\sum_{s}\pi(a|s)\cdot U(s)$, $\pi_u(s|a)=\pi(a|s)\frac{U(s)}{\pi_(a)}$, and  $ei_u(\pi,a)=\sum_s \pi_u(s|a)\log\frac{\pi_u(s|a)}{\pi_u(a)}$. Note that $ei_u$ recovers the original notion of effective information in \cite{bt:08}, which Definition~\ref{d:ei} generalizes.

The following theorem is proved using a version of Occam's razor \cite{seldin:08}.  

\begin{thm}[error bound for empirical reward]\eod
	\label{t:vcr}
	Suppose the neuron chooses a deterministic policy $\pi$ under the constraint that it spikes for a fixed fraction of situations: $\pi_u(a_1)=const$. Further, suppose that situations are sampled i.i.d. Without loss of generality,\footnote{Since only relative rewards affect the choice of optimal policy, it follows that negative rewards can be stripped out of the optimization problem by introducing an additive constant.} assume that rewards lie in $[0,b]$. Then with probability at least $1-\delta$,
	\begin{equation}
		{\left|\reward_\pi-\widehat{\reward}_\pi\right|} \leq
		b\cdot\sqrt{|\cS|\cdot\frac{ei_u(\pi,a_1)+1}{2T_1\cdot e^{ei_u(\pi,a_1)}}+\frac{\log\frac{2}{\delta}}{2T_1}}
	\end{equation}
	Guarantees improve as $ei_u$ increases since $\frac{x+1}{e^x}$ decreases as $x$ increases. 	
\end{thm}

Note that $ei$ and $ei_u$ covary since increasing the number of situations where a neuron spikes decreases \emph{both} $ei$ and $ei_u$; similarly, decreasing the number of situations where a neuron spikes increases both $ei$ and $ei_u$.

Thus, tightening the metabolic constraint in Assumption~\ref{a:metabolic}, by choosing low $\rho$, yields policies that have better guarantees on their reward estimates.

\begin{rem}
	\label{r:iid}
	The assumption that inputs are i.i.d. is not realistic for cortical neurons. We make two remarks. First, if rewards are only non-zero in the presence of neuromodulatory signals, then the assumption  states that situations directly preceding neuromodulator release are i.i.d, which is more reasonable. Second, similar results have recently been obtained in non-i.i.d. scenarios using more sophisticated PAC-Bayes methods \cite{rubin:11} -- and these may be applicable to our setting.
\end{rem}

\paragraph{Guarantees in practice.}
Figure~\ref{f:ei-bound} plots information encoded in spikes ($x$-axis) against the difference between the normalized empirical and expected reward ($y$-axis). Rewards were drawn randomly and the expected and empirical error for 16,000 deterministic policies with $|\cS|=50$, $P(\cS)$ uniform and $T_1=20$ were computed. Policies were sampled randomly with $k$, the number of situations causing the policy to spike, varying uniformly across $[1,25]$. The figure shows that both normalized error and the standard error of the error decrease as $ei$ increases. Figure~\ref{f:ei-bound} confirms that the bound in Theorem~\ref{t:vcr} is a reasonable guide to performance in practice.

Thus, the more information encoded in spikes, the better a neuron's empirical estimate of its expected reward per spike. The metabolic constraint in Assumption~\ref{a:metabolic} controls the quality of a neuron's empirical estimates of its expected reward.

\section{Discussion}
\label{s:discussion}

The space of possible policies that the cortex as a whole could implement is \emph{vast}: it consists in choosing synaptic weights of  millions or billions of neurons each receiving inputs from thousands or tens of thousands of other neurons. However, the space of policies that makes sense biologically is probably much smaller due to biological as well as learning-theoretic considerations. This paper has shown that metabolically constrained reward maximization provides a biologically plausible way for neurons to distinguish useful outputs from those that are not. It follows from our assumptions that spikes:
\begin{itemize}
	\item are responsible for most of the information communicated by neurons;
	\item signal when, based on empirical estimates, neurons predict high reward; and
	\item come equipped with performance guarantees that increase as the metabolic constraint is tightened, and so more information is encoded in spikes.
\end{itemize}

This suggests that neurons should \emph{privilege spikes} during learning; and indeed there is a large body of experimental evidence that this is exactly what occurs \cite{Markram:1997fk, song:00, dan:04, dan:06}: synaptic plasticity is triggered by pre- and post-synaptic spikes, with the decision to potentiate or depotentiate depending on their precise timing.

An important unanswered question is setting the metabolic constraint $\rho$. Clearly, if $\rho$ is too low, then neurons will barely fire at all, which is not desirable. Conversely, if $\rho$ is too high then information and empirical reward encoded in spikes, as well as the quality of the empirical estimates, all degrade, which is also to be avoided. In this paper, we have simply highlighted the metabolic constraint $\rho$ as an important lever that neurons may actively manipulate. Finding an optimal value or range of values for $\rho$ is outside the scope of this paper. 

\paragraph{Biasing neuronal mechanisms.}
$Q$-learning is not a practical learning rule; it simply constructs a giant lookup table. If we think of the set of inputs causing a neuron to fire as a window of varying shape, then the metabolic constraint fixes the size of the window and $Q$-learning places no other constraints on its shape.

However, if upstream neurons systematically encode information and reward in spikes, then it makes sense to bias shape of downstream neuronal ``firing windows'' to take the asymmetry between spikes and silences into account. This is exactly what we occurs in cortex. The vast majority of synapses are \emph{excitatory}: the more input spikes neurons receive, the more likely they are to spike themselves. 

Cortical neurons are biased toward firing for more spikes, which makes sense if spikes are reliable predictors of future reward. Thus, neuronal ``firing windows'' are shaped such that if a firing pattern causes a neuron to spike, then so does any firing pattern containing strictly more spikes. Moreover, sufficiently many pre-synaptic spikes will (essentially) always cause a neuron to fire. 

Thus, neurons aggregate evidence for high reward (spikes from upstream neurons), and modify their synaptic strengths to maximize their empirical reward. Fine-tuning is necessary since no two neurons have exactly the same connectivity, and therefore no two neurons use the same data to predict global neuromodulatory signals. 

Inhibitory (GABA) synapses do not fit this picture. Extending our framework for plasticity to include both types of neurons is challenging. However, we conjecture this incompatibility can be overcome, with inhibition playing a complementary role; possibly centered on imposing sparse activity in the brain \cite{Olshausen:1997xy, Olshausen:2004fk} and selecting competing neural assemblies and structures.

\paragraph{Global versus overlapping local optimizations.}
The results in this paper depend on specific assumptions and model choices. A particularly important assumption is that all neurons attempt to maximize the same global neuromodulatory reward signal. In this scenario, since neurons differ in their connectivity, they have access to different subsets of brain activity representing different environemental features and events, and therefore specialize on different sources of reward. 

The reality is more complicated with multiple overlapping neuromodulatory signals including dopamine, noradrenaline, acetylcholine and others. Moreover, neurons involved in, say, early visual processing may not require neuromodulatory guidance; rather, they may search for stable invariants over short time frames (hundreds of milliseconds). There is likely a diverse array of reward functions implemented across cortex. 

It is thus unclear whether the brain can be accurately described as optimizing a single well-defined reward function. Nevertheless, until decisively shown to be false, we believe the optimization perspective to be a fruitful working hypothesis. Even if it does not apply to the brain as a whole, it may nevertheless provide insight into how populations of neurons in specific brain areas converge on useful behaviors.

\paragraph{A spiking currency.}
Finally, it is interesting to speculate on an analogy between spikes and paper currency. Money plays many overlapping roles in an economy, including: {\rm (i)} focusing attention; {\rm (ii)} stimulating activity; and {\rm (iii)} providing a quantitative lingua franca for tracking revenues and expenditures. Note that, as for the brain, it is unclear whether an economy as a whole can be reduced to optimizing a single well-defined function.

Spikes may play similar roles in cortex to those of paper currency in an economy. Spikes focus attention: STDP and other proposed learning rules are particularly sensitive to spikes and spike timing. Spikes stimulate activity: input spikes cause output spikes. Finally, spikes leave trails of (Calcium) traces that are used to reinforce and discourage neuronal behaviors in response to neuromodulatory signals. 

Neither money nor spikes are intrinsically valuable. Currency can be devalued by inflation. Similarly, the information content and guarantees associated with spikes can be eroded by overpotentiating synapses which reduces their selectivity (potentially leading to epileptic seizures in extreme cases). Regulating the information content of spikes is therefore essential. Assumption~\ref{a:metabolic} provides a simple constraint that can be approximately imposed by regulating synaptic weights. Indeed, there is evidence that one of the functions of sleep is precisely this \cite{tononi:03b, Gilestro:2009fk, vyazovskiy:09, Maret:2011rt}. 

Spikes with high information content are valuable because they come with strong guarantees on their estimates. They are therefore worth paying attention to, worth responding to, worth keeping track of, and worth learning from. 

{
}

\setcounter{section}{0}
\renewcommand{\thesection}{A.\arabic{section}}

\section{Proof of Theorem~\ref{t:maximizer}}
\label{s:maximizer}

\begin{proof}
	First we show that deterministic policies maximize the information encoded in spikes, then we show that deterministic policies maximize reward per spike.
	
	\emph{Deterministic policies maximize effective information subject to $\pi(a_1)\leq \rho$.}
	Observe that
	\begin{align*}
		ei(\pi,a) & = \sum_{s\in\cS} \pi(s|a) \log \frac{\pi(s|a)}{P(s)} \\
		& = \sum_{s\in\cS} \pi(s|a) \log \frac{\pi(a|s)}{\pi(a)} \\
		& = \underbrace{-\log \pi(a)}_{\log \text{frequency of output }a} + \underbrace{\sum_{s\in\cS} \pi(s|a) \log \pi(a|s)}_{\text{stochasticity}}.
	\end{align*}
	The $\log$-frequency, or surprise, term is nonnegative and the stochasticity term is non-positive. It is easy to see that the stochasticity term is maximized at 0 if and only if output $a$ is chosen deterministically -- i.e. $\pi(a|s)$ is either 0 or 1 for all $s\in\cS$.

	\emph{Deterministic policies maximize reward per spike.}
	Suppose there are $N$ situations ordered according to their empirical reward, so that $\widehat{R}(s_1)< \cdots < \widehat{R}(s_N)$. It is clear that a deterministic policy spiking only for the $\rho\cdot N$ policies with highest empirical reward maximizes empirical reward after spiking.
\end{proof}

\section{Proof of Theorem~\ref{t:conc-inf}}
\label{s:mi-ei}

\begin{proof}
	Observe that
\begin{align*}
	P(s|a_0) & = \frac{P(s)-\pi(s|a_1)\cdot \pi(a_1)}{1-\pi(a_1)}\\
	& = P(s) + \pi(a_1)\big(P(s) - \pi(s|a_1)\big) + O\big(\pi(a_1)^2\big).
\end{align*}
Thus to first order in $\pi(a_1)$, $\pi(s|a_0)$ is of the form $p+\delta p$ where $\int\delta p =0$. We can then compute
\begin{align*}
	D\big[p+\delta p\,\big\|\, p\big] & = \int (p+\delta p)\log_2\frac{p+\delta p}{p}\\
	& = \alpha\int (p+\delta p)\frac{\delta p}{p}\left(1-\frac{\delta p}{2p}\right) +O\big((\delta p)^3\big)
	= \alpha\int \frac{(\delta p)^2}{2p}+ O\left((\delta p)^3\right),
\end{align*}
where $\alpha=\frac{1}{\ln 2}$.

Substituting $p=P(s)$ and $\delta p=\pi(a_1)\big(P(s) - \pi(s|a_1)\big)$ gives
\begin{align*}
	D\big[\pi(S|a_0)\,\big\|\, P(S)\big] & =
	D\big[p+\delta p\,\big\|\,p\big] +O\big(\pi(a_1)^2\big)\\
	& = \alpha\cdot 
	\pi(a_1)^2\int\frac{\big(P(s)-\pi(s|a_1)\big)}{2P(s)}
	+ O\big(\pi(a_1)^2\big).
\end{align*}

Thus $I(S;A) = \pi(a_1) D\big[\pi(S|a_1)\,\big\|\,P(S)\big]+O\big(\pi(a_1)^2\big)$.
\end{proof}

\section{Proof of Theorem~\ref{t:vcr}}
\label{s:Ockham}

Occam's razor can be paraphrased to say that the simplest hypothesis should be preferred. Suppose we have a setof hypotheses $\ccH$ with prior distribution $P(h)$ on $\ccH$. Let $-\log P(h)$ denote the complexity of hypothesis $h$. Let $L:\cX\times\ccH\rightarrow[0,b]$ be a loss function. Then

\begin{thm}[Occam's razor]\eod
	\label{t:occam}
	For any data generating distribution on $\cX$ and any prior distribution $P(h)$ over $\ccH$, with a probability greater than $1-\delta$ over drawing an i.i.d. sample from $\cX$ of size $T$, for all $h\in\ccH$:
	\begin{equation*}
		\left|L(h)-\widehat{L}(h)\right| \leq b\cdot \sqrt{\frac{-\log P(h)+\log\frac{2}{\delta}}{2T}}.
	\end{equation*}
\end{thm}

\begin{proof}
	See \cite{seldin:08}.
\end{proof}

Let $\ccH=\big\{\pi:\cS\rightarrow \cA\big\}$ denote the set of deterministic policies, where $\cA=\{a_0,a_1\}=\{0,1\}$. Define loss function
\begin{equation*}
	L:\Big(\cS\times \cN\Big)\times \ccH\longrightarrow\bR:(s,\nu)\times \pi\mapsto \reward(s,\pi(s),\nu).
\end{equation*}
Further, set probability distribution $P(s,\nu)=P(s)\cdot P(\nu|s)$ on $\cS\times\cN$. Theorem~\ref{t:occam} holds for any sampling distribution. In particular we may use the policy $\pi$ to restrict samples to situations that cause the neuron to spike to obtain $P(s,\nu|\pi(s)=a_1)=\pi(s|a_1)\cdot P(\nu|s)$.  It follows that
\begin{gather*}
	L(\pi) = \reward_\pi=\bE\Big[\reward(s,a_1,\nu)\,\Big|\,\pi(s|a_1)\cdot P(\nu|s)\Big]
	\text{ is the expected reward and}\\
	\widehat{L}(\pi) = \widehat{\reward}_\pi=\frac{1}{T_1}\sum_{\{(s_t,\nu_t)| \pi(s_t)=a_1\}} \reward\big(s_t,a_1,\nu_t\big)
	\text{ is the empirical reward,}
\end{gather*}
where $T_1$ is the number of spike produced by the neuron during $[1,T]$.

\begin{thmtag}{\ref{t:vcr}.}{}\eod
	\emph{Let $\ccH\supset\ccH_k=\big\{\pi:\cS\rightarrow\cA\text{ s.t. }|\pi^{-1}(a_1)|=k\big\}$ denote policies that spike for exactly $k$ situations.
	Given the setup above, with probability at least  $1-\delta$,
	\begin{equation*}
		{\left|\reward_\pi-\widehat{\reward}_\pi\right|} \leq
		b\cdot\sqrt{|\cS|\cdot\frac{ei_u(\pi,a_1)+1}{2T_1\cdot e^{ei_u(\pi,a_1)}}+\frac{\log\frac{2}{\delta}}{2T_1}}
	\end{equation*}
}
\end{thmtag}

\begin{proof}
	Let $N=|\cS|$ denote the number of possible situations. We put the uniform prior on $\ccH_k$, so $P(\pi)=\frac{1}{{N\choose k}}$. By Stirling's approximation, $\log{N\choose k}\leq k\log \left(\frac{N\cdot e}{k}\right)$, and it follows that
	\begin{equation*}
		-\log P(\pi) =  \log {N\choose k}\leq k\log\frac{N\cdot e}{k}=N\cdot\frac{k}{N}\cdot \left(\log\frac{N}{k}+1\right).
	\end{equation*}	
	The theorem follows since $\pi_u(a_1)=\frac{k}{N}$ and $ei_u(\pi,a_1)=\log \frac{N}{k}$.
\end{proof}


\begin{figure*}[t]
	\centering
	\subfigure[Effective information from spikes]{\includegraphics[width=0.85\textwidth]{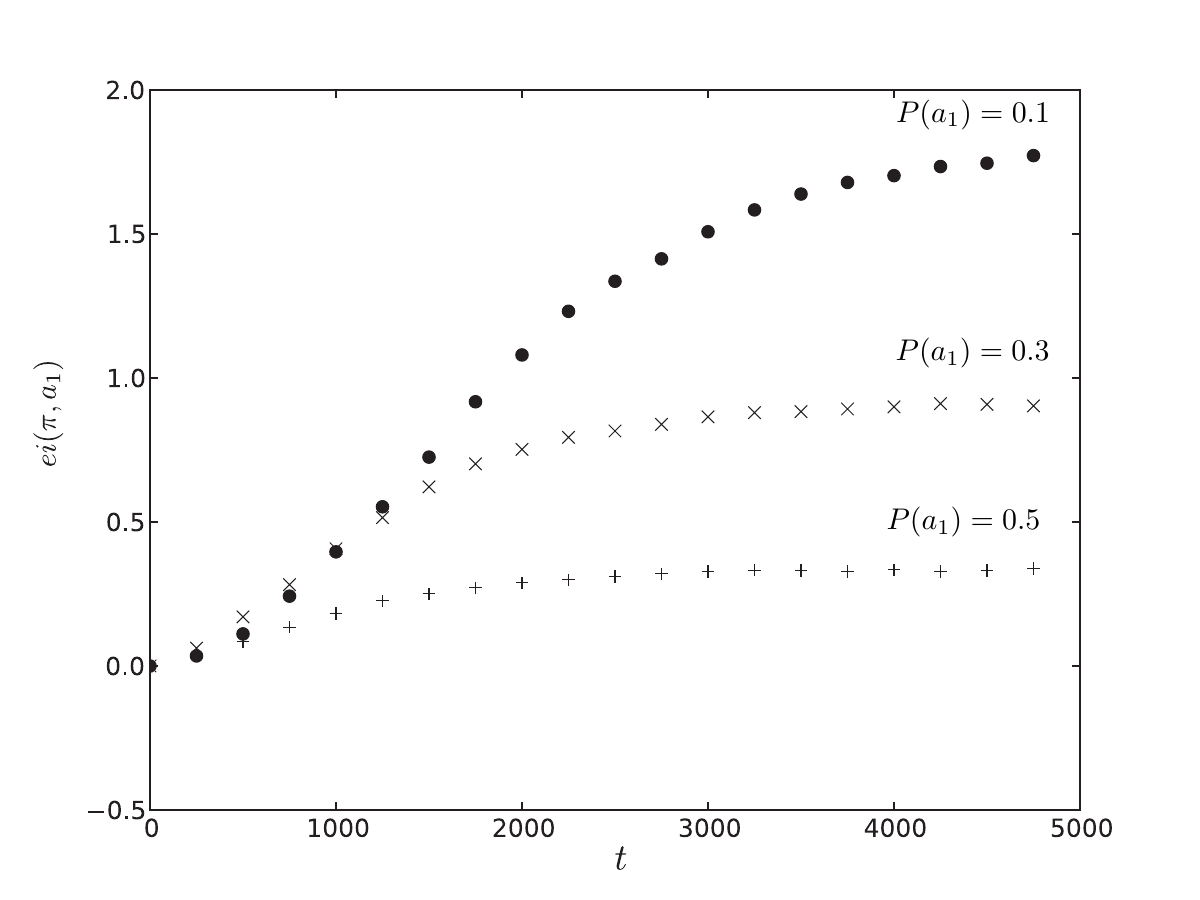}}
	\subfigure[Empirical reward after spiking]{\includegraphics[width=0.85\textwidth]{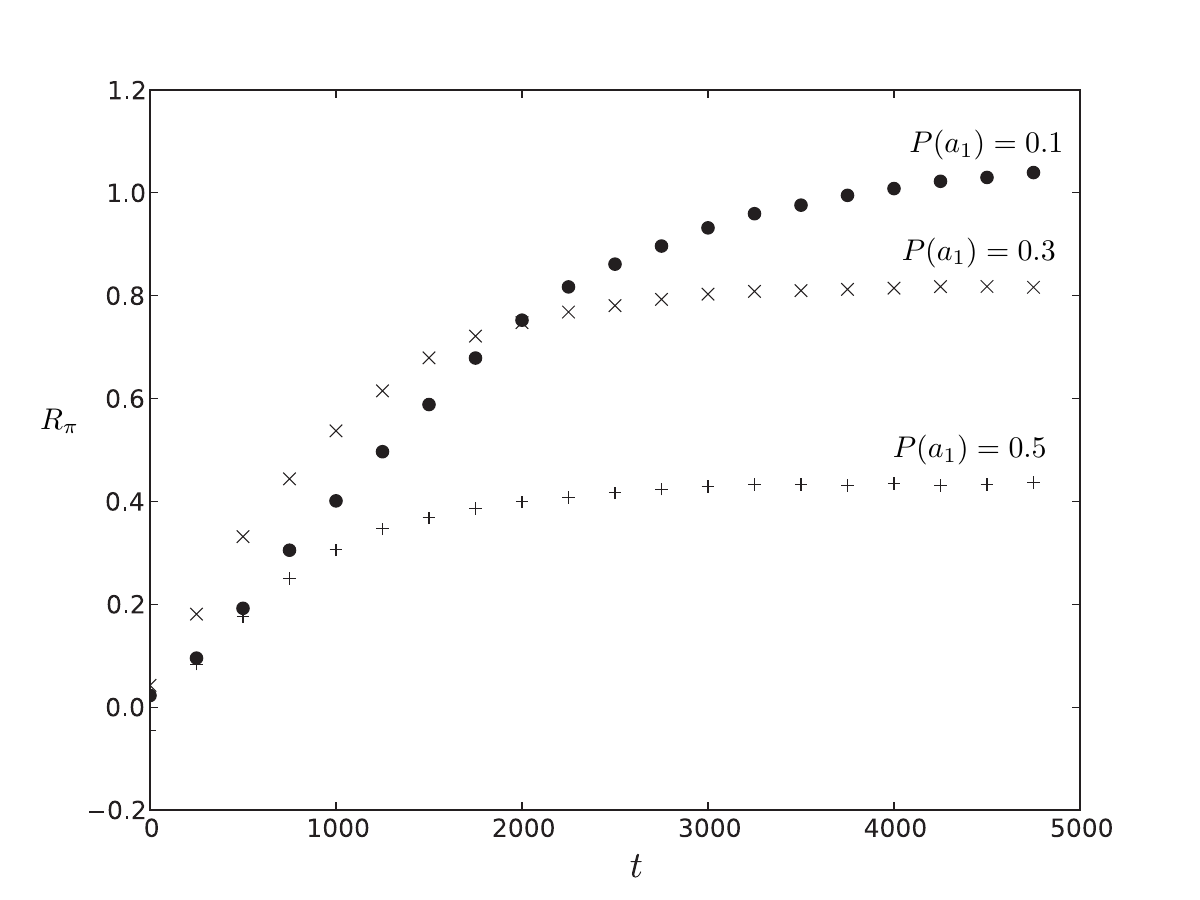}}
 	\caption{
	\textbf{Metabolically constrained $Q$-learning.}
	As neurons learn, effective information and empirical reward increase in qualitatively the same way. Tighter metabolic constraints yield both higher effective information and greater rewards.
	}
	\label{f:ei-ev}
\end{figure*}

\begin{figure}
	\begin{center}
		\includegraphics[width=0.85\textwidth]{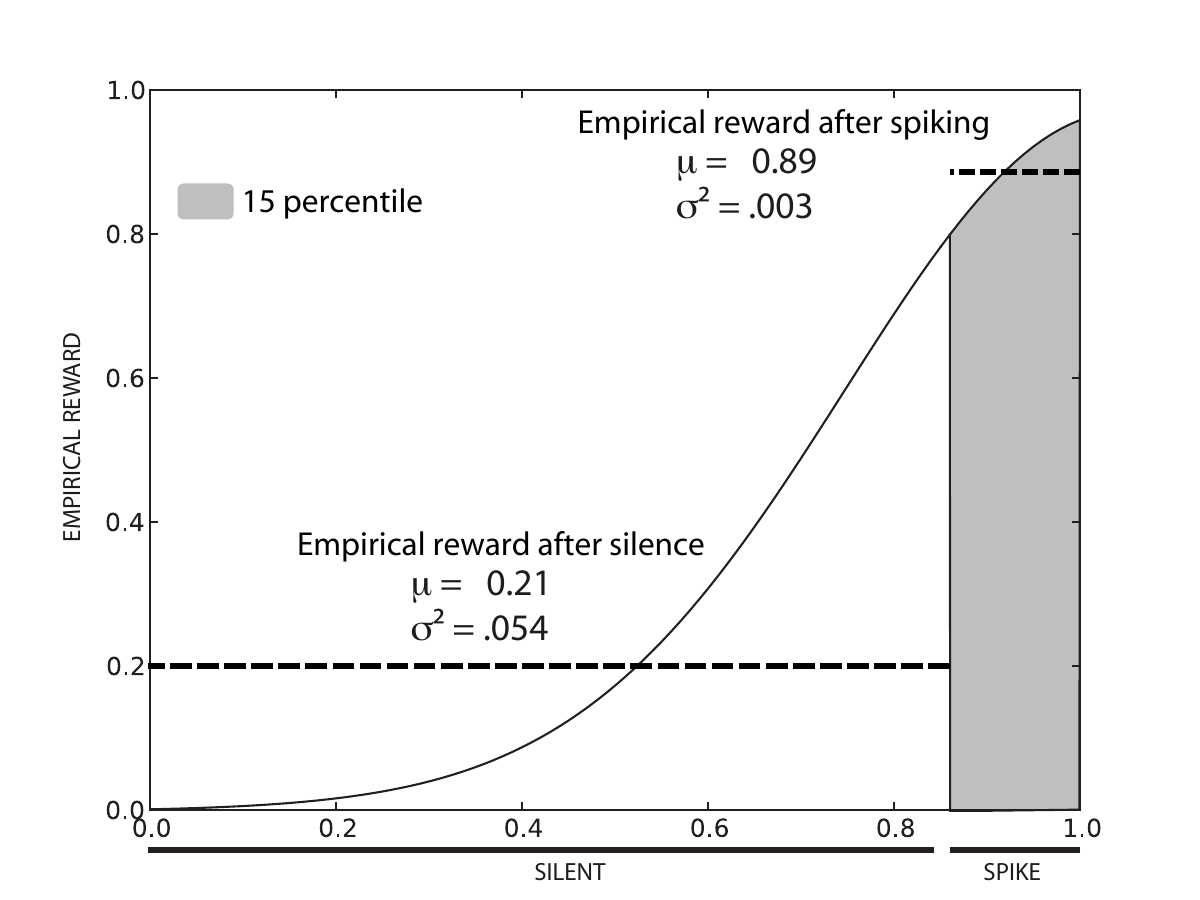}		
	\end{center}
 	\caption{
	\textbf{Concentration of reward.}
	The $x$-axis lists situations, ranked by the empirical reward the neuron would receive if it spiked. Situations are grouped into two categories: the top 15\%, which cause spikes, and the rest, which do not. The average and variance of the empirical reward in each category is displayed.
	}
	\label{f:bfo}
\end{figure}

\begin{figure}
	\begin{center}
		\includegraphics[width=0.85\textwidth]{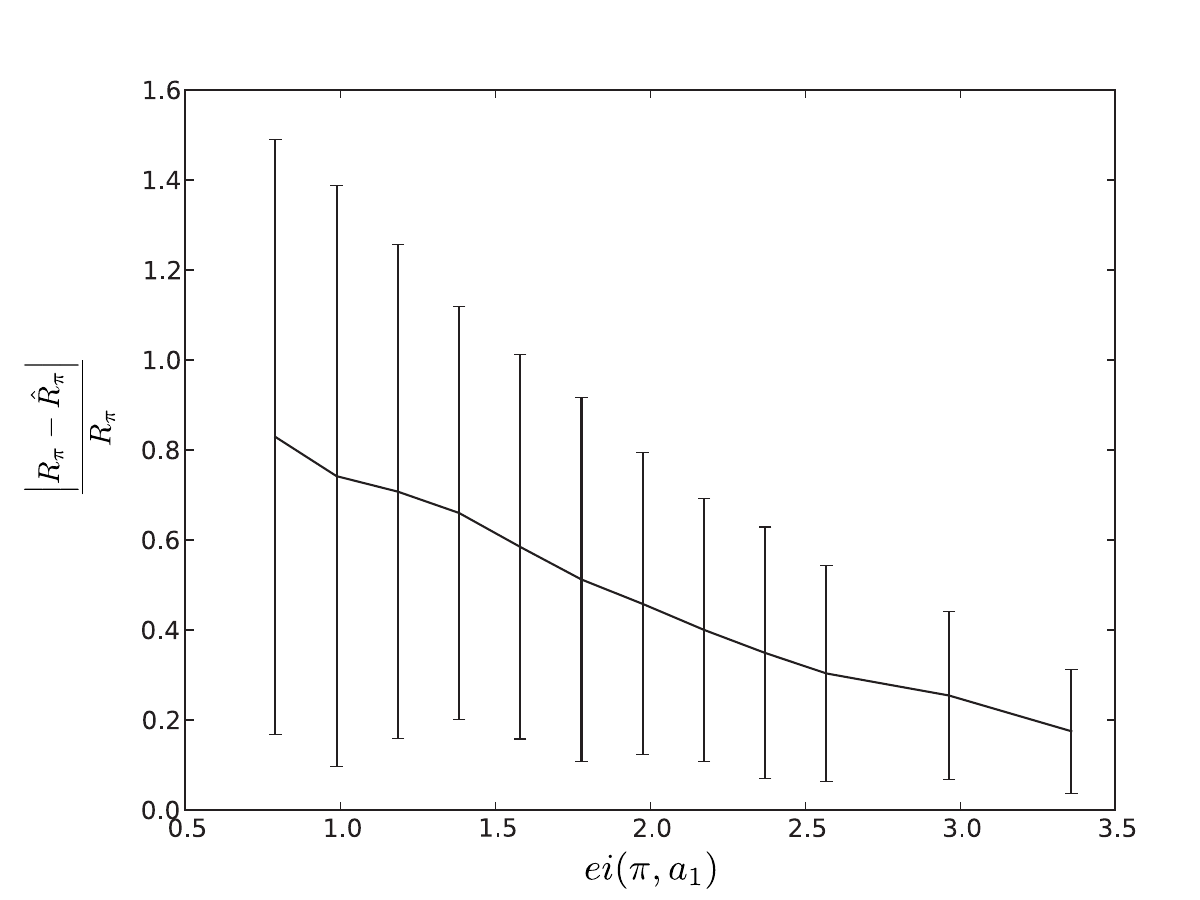}
	\end{center}
 	\caption{
	\textbf{Empirical versus expected reward.} 
	The normalized difference between expected and empirical reward, plotted against effective information.
	}
	\label{f:ei-bound}
\end{figure}

\end{document}